\newtheorem{thm}{Theorem}[section]
\theoremstyle{definition}
\def\BibTeX{{\rm B\kern-.05em{\sc i\kern-.025em b}\kern-.08em
    T\kern-.1667em\lower.7ex\hbox{E}\kern-.125emX}}
\begin{document}

\title{Statistical CSI-based Beamforming for RIS-Aided
Multiuser MISO Systems using Deep
Reinforcement Learning}

\author{Mahdi Eskandari,
        Huiling Zhu,
        Arman Shojaeifard,
        and~Jiangzhou Wang,~\IEEEmembership{Fellow,~IEEE,}
        }


\maketitle

\begin{abstract}
The paper presents a joint beamforming algorithm using statistical channel state information (S-CSI) for reconfigurable intelligent surfaces (RIS) for multiuser MISO wireless communications. We used S-CSI, which is a long-term average of the cascaded channel as opposed to instantaneous CSI utilized in most existing works. Through this method, the overhead of channel estimation is dramatically reduced. We propose a proximal policy optimization (PPO) algorithm which is a well-known actor-critic based reinforcement learning (RL) algorithm to solve the optimization problem. To test the efficacy of this algorithm, simulation results are presented along with evaluations of key system parameters, including the Rician factor and RIS location, on the achievable sum rate of the users.
\end{abstract}

\begin{IEEEkeywords}
Reconfigurable intelligent surface (RIS), intelligent reflecting surface (IRS), deep reinforcement learning, proximal policy optimization, S-CSI, multi-user MISO.
\end{IEEEkeywords}

\IEEEpeerreviewmaketitle

\section{Introduction}
Reconfigurable intelligent surfaces (RIS) have recently emerged as an ideal candidate for delivering high data rate wireless services in next-generation wireless systems \cite{wu2019towards}.
An RIS
is a meta-surface comprising a large number of reflecting elements, capable of reflecting
the incident signal with a given phase shift. By densely deploying the RIS in wireless communication networks and intelligently coordinating their elements, the wireless
channels between the transmitter and receiver can be intentionally and deterministically
controlled to improve signal quality at the receiver and the network’s capacity \cite{huang2019reconfigurable, di2020smart}.
 Thus,  phase shift design  (passive beamforming) at the RIS, as well as active beamforming at the base station (BS), are of crucial importance in the design process. Several works have addressed the phase shift design problem under various configurations in response to this. For instance, in \cite{wu2019intelligent}, the optimization of active beamforming at the BS and passive beamforming at the RIS was explored where the generalization to discrete phase shifts proposed in \cite{wu2019beamforming}. In \cite{huang2019reconfigurable}, energy efficiency is maximized by jointly optimizing beamforming
at the IRS and the power allocation.
 In \cite{ wu2018intelligent, wu2019intelligent}, the authors design beamforming at the BS
and IRS with the goal of minimizing BS’ transmit power.
 In light of recent developments in machine learning, an unsupervised learning method has been proposed for passive beamforming design \cite{song2020unsupervised}. A joint beamforming and phase-shift design problem was also addressed with deep reinforcement learning in \cite{huang2020reconfigurable}.

  One of the biggest challenges at the BS is acquiring CSI (channel state information). Channel reciprocity has been exploited in time-division duplexing (TDD) systems, and uplink training can be used to obtain the instantaneous CSI at the BS \cite{marzetta2010noncooperative, rusek2012scaling}. With a non-orthogonal pilot in each cell in a heterogeneous and homogeneous network, a cell's estimation will be affected by pilot contamination from the other cells. \cite{jose2011pilot, wang2015acquisition}. 
  The results of \cite{marzetta2010noncooperative} showed that large-scale antenna systems can suffer from this pilot contamination effect when there are many antennas. In FDD systems, the BS can access the CSI through a feedback channel.
   A large number of antennas in the BS can make instant CSI feedback difficult on the feedback link. The BS has problems collecting accurate CSI information, especially when users are highly mobile. Another approach is to use the channel's second-order statistics. The statistical CSI varies significantly less rapidly than the instantaneous CSI \cite{adhikary2013joint}. In this way, the BS can obtain feedback over a longer period of time, reducing the amount of feedback.

Prior to this, the performance was optimised primarily using instantaneous CSI. Wireless communication based on RIS is frequently hindered by the difficulty of obtaining accurate instantaneous CSI \cite{wu2019towards}.
 This is because RIS, when it is in the reflecting mode, is a passive device. It is not capable of receiving or sampling incident signals, as opposed to conventional active antenna arrays.
 Using single-input, single-output (MIMO configuration only complicates things further), we will need to estimate $2M$ CSIs associated with $M$ IRS elements, plus one direct channel, and all through a single antenna, whether at receiver or transmitter.
 In order to avoid possible underdetermination in channel estimation, it may be necessary to transmit $2M+1$ pilot symbols with distinct associated IRS states \cite{dang2020joint}.
There have been several studies of instantaneous CSI \cite{you2020channel, wang2020channel, ning2020channel}, but it would be better to explore how to exploit statistical CSI, which is a much more stable quantity that changes very slowly over time.
 In the long run, it would be preferable to explore strategies for utilizing statistical CSI, which is a much more stable quantity that increases more slowly over time. In this case, channel statistics would be gathered and updated in sufficient time. The overhead of signaling exchange and computation burden can be reduced, as the state of RIS needs not to be changed frequently.Since the state of RIS does not need to be changed frequently, the overhead of signaling exchange and computation burden is reduced. Because the state of the RIS doesn't need to be changed frequently, the overhead of signaling exchange and computation burden can be reduced. Therefore, it is reasonable to utilize statistical CSI to determine RIS coefficients over an extended period of time.
To optimise the reflection coefficients of RIS-aided wireless systems, CSIs of the BS-RIS and RIS-user links must be accurate for the aforementioned reasons. Due to the high number of passive reflecting elements in the BS-RIS and RIS-UES links, the CSI may be difficult to determine in reality. As a result, it becomes imperative to maintain low training overhead while maximizing performance gains offered by RIS.
With the above works, CSI has been assumed to be readily available instantaneously to streamline the design process. Since the RIS is passive, it is extremely difficult to acquire accurate instantaneous CSI; thus, substantial training costs will be incurred. Therefore, a statistical CSI provides the most benefit in designing the system, because it varies much slower and can be obtained relatively easily \cite{gan2021ris}.  In works \cite{han2019large, hu2020statistical}, only S-CSI have been used to design passive beamformers for single-user multiple-input single-output (MISO) systems. For downlink transmission in multi-user scenarios, a two-timescale beamforming approach has been proposed \cite{zhao2021two}, in which RIS phase shifts are built up using statistical CSI, but BS transmit beamforming is built up using instantaneous CSI. 

Unlike most previous works, which rely on instantaneous CSI and S-CSI mixed types, we propose a pure S-CSI-based approach to designing multiuser RIS-assisted downlink systems. 
The main contributions of this paper are summarized as follows:
\begin{itemize}
    \item For multi-user MISO (MU-MISO) systems an approximate expression for the ergodic sum rate is developed. This leads to the formulation of a joint active-passive beam forming design problem.
    \item The objective function is intractable due to the non-convex constraint and the intricate relationship between the BS transmit beamformer and the RIS passive beamformer. To tackle this issue, a proximal policy optimization (PPO) algorithm has been proposed to solve the optimization problem. PPO is a powerful reinforcement learning algorithm from the actor-critic family, which has been demonstrated to outperform other actor-critic algorithms \cite{schulman2017proximal}.

\end{itemize}

The rest of this paper is organized as follows. Section II provides the system model, whereas the problem formulation is described in section III. Section IV gives the detailed PPO algorithm for solving the optimization problem. Section V provides simulation setup and results. Section VI concludes the paper.

\textit{Notations:} $x$ is a scalar, $\mathbf{x}$ denotes a vector, $\mathbf{X}$ is a matrix, and $\mathcal{A}$ represents a set. For a vector or matrix, the transpose and Hermitian (conjugate-transpose) operators are denoted by $(.)^T$ and $(.)^H$, respectively whereas the conjugate operator is given by $(.)^*$. $\mathbf{I}_N$ denotes an identity matrix
with subscript $N$ being the matrix dimension. Additionally, an all zero matrix of size $N \times N$ is represented by $\mathbf{0}_{N \times N}$. The trace operator of a matrix is denoted by $\mathrm{trace}(.)$. $\mathbb{C}$ represents the complex set and $\mathrm{diag}(\mathbf{a})$ denotes a diagonal matrix with entries $\mathbf{a}$ along its main diagonal. $\rVert \mathbf{x} \rVert$ denotes the Euclidean norm of $ \mathbf{x}$ and also $|x|$ shows the absolute value of complex scalar $x$. $\mathbb{E}[.]$ denotes expectation

\section{System Model}
\begin{figure}[t]
    \centering
    \includegraphics[width=0.7\textwidth]{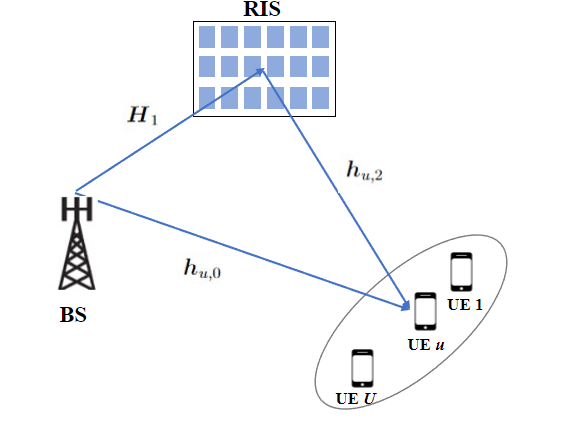}
    \caption{Illustration of RIS-aided multi-user MISO communications system}
    \label{fig:sys}
\end{figure}

As shown in Fig.~\ref{fig:sys}, a multi-user MISO (MU-MISO) communication system with $N$ antennas at the transmitter which is communicating with $U$ single-antenna user equipments (UE) is considered. Transmission is done with the assistance of a reconfigurable intelligent surface (RIS) which is equipped with $M$ reflecting elements. By denoting $\xi_m$ as the reflection coefficient of the $m$-th element of RIS, the reflection matrix of the RIS panel can be expressed as $\boldsymbol{\Xi} = \mathrm{diag}(\xi_1, \dots, \xi_M)$ where $\xi_m = e^{j \phi_m}$ with $\phi_m$ denoting the phase shift of $m$-th element of RIS. Denote $\mathbf{h}_{u,0} \in \mathbb{C}^{N \times 1}$ as the direct channel from the transmitter to the $u$-th UE, $\mathbf{H}_1 \in \mathbb{C}^{M \times N}$ that from transmitter to RIS and finally $\mathbf{h}_{u,2} \in \mathbb{C}^{M \times 1}$ that from RIS to the $u$-th UE. Hence, the equivalent effective channel from the transmitter to the $u$-th UE would be $\mathbf{h}_u^T \triangleq  \mathbf{h}_{u,0}^T + \mathbf{h}_{u,2}^T \boldsymbol{\Xi} \mathbf{H}_1$.

We will now take a closer look at channel models. The Rician distribution is used to model the channels. The channel matrix $\mathbf{h}_{u,2}$ between the RIS and $u$-th UE is represented as \cite{gan2021ris, wang2021joint}
\begin{equation}
    \mathbf{h}_{u,2} = \sqrt{\frac{\delta_{u,2} \kappa_{u, 2}}{1+\kappa_{u, 2}}} \mathbf{\bar{h}}_{u,2} + \sqrt{\frac{\delta_{u,2}}{1+\kappa_{u, 2}}} \mathbf{\tilde{h}}_{u,2},
\end{equation}
where $\sqrt{\delta_{u,2}}$ denotes the distance dependent path-loss factor, also, $\kappa_{u, 2}$ denotes the Rician factor between the RIS and the $u$-th UE. Furthermore, $\Bar{\mathbf{h}}_{u,2} = \mathbf{a}_\text{RIS}(\phi_u^{(\text{RIS})}, \psi_u^{(\text{RIS})})$ and $\phi_u^{(\text{RIS})}$ ($\psi_u^{(\text{RIS})}$) is the azimuth (elevation) angle of departure (AoD) from the RIS to the $u$-th UE. $\mathbf{a}_\text{RIS}(\phi_u^{(\text{RIS})}, \psi_u^{(\text{RIS})})$ is the array response vector at the RIS side. By assuming a  uniform planar arrays (UPA), $\mathbf{a}_\text{RIS}(\phi_u^{(\text{RIS})}, \psi_u^{(\text{RIS})})$ is given by
 \begin{align}
     \mathbf{a}_\text{RIS}(\phi_u^{(\text{RIS})}, \psi_u^{(\text{RIS})}) = \frac{1}{\sqrt{M_{\text{H}} M_\text{V}}} [1,\dots,& e^{j \frac{2 \pi}{\lambda}D(h \sin{\phi_u^{(\text{RIS})} \sin (\psi_u^{(\text{RIS})})}+v \cos{\psi_u^{(\text{RIS})}})} ,\dots, \\ \nonumber
     &e^{j \frac{2 \pi}{\lambda}D((M_{\text{H}}) \sin{\phi_u^{(\text{RIS})} \sin (\psi_u^{(\text{RIS})})}+ (M_{\text{V}}-1) \cos{\psi_u^{(\text{RIS})}})}]^T,
 \end{align}
  where $\lambda$ is the wavelength, $D$ is the distance between antenna elements and also $0 \leq h \leq M_{\text{H}}-1$, $0 \leq v \leq M_{\text{V}}-1$ with $M = M_{\text{H}} M_{\text{V}}$ with $M_{\text{H}}$ and $M_{\text{V}}$ being the number of elements at horizontal and vertical axis, respectively. The array response vector of the BS side can be written in a similar fashion with appropriate change in notations. 
 
 On the other side, the channels between the transmitter and RIS, and direct one from transmitter to the $u$-th receiver could be expressed as \cite{gan2021ris, wang2021joint, zhang2020capacity, zhou2020joint}  
 \begin{align}
     \mathbf{H}_1 &= \sqrt{\frac{ \delta_1 \kappa_1}{1+\kappa_1}}  \Bar{\mathbf{H}}_1 + \sqrt{\frac{\delta_1}{1+\kappa_1}}  \Tilde{\mathbf{H}}_1, \\
    \mathbf{h}_{u,0} &= \sqrt{\frac{ \delta_{u,0} \kappa_{u,0}}{1+\kappa_{u,0}}}  \Bar{\mathbf{h}}_{u,0} + \sqrt{\frac{\delta_{u,0}}{1+\kappa_{u,0}}}  \Tilde{\mathbf{h}}_{u,0},  
 \end{align}
  with $\delta_1$ and $\kappa_1$ being the distance dependent pathloss and the Rician factor of the BS-RIS link and, moreover, $\delta_{u, 0}$ and $\kappa_{u, 0}$ denote the distance dependent pathloss and Rician factor between BS and the $u$-th UE link.
 Furthermore, $\Bar{\mathbf{H}}_1 = \mathbf{a}_\text{RIS}(\phi^{(\text{RIS})}, \psi^{(\text{RIS})}) \mathbf{a}_\text{BS}(\phi^{(\text{BS})}, \psi^{(\text{BS})})^H$ where $\phi^{(\text{RIS})}$ and $\psi^{(\text{RIS})}$ being the azimuth and elevation angle of angle of arrival (AoA) to the RIS and $\phi^{(\text{BS})}$ and $\psi^{(\text{BS})}$ denote the azimuth and elevation AoD from the BS to the RIS direction. On the other hand, $ \Bar{\mathbf{h}}_{u,0} = \mathbf{a}_\text{BS}(\phi_u^{(\text{BS})}, \psi_u^{(\text{BS})})$ where $\phi_u^{(\text{BS})}$ and $\psi_u^{(\text{BS})}$ being the azimuth and elevation AoD from the BS in the direction of the $u$-th UE. 
 
 By considering the fact that there is no spatial correlation among the antennas, the distribution of non-line-of-sight (NLoS) component of the channels, $\mathbf{\tilde{h}}_{u,2}$, $ \Tilde{\mathbf{h}}_{u,0}$ and $ \Tilde{\mathbf{H}}_1$ would be independently and identically distributed (i.i.d.)
complex Gaussian random variables with the zero mean and
unit variance.
 
 Based on the above descriptions, the received signal for the $u$-th UE $y_u \in \mathbb{C}$ is given by
 \begin{equation}
     y_u = \sum_{u=1}^{U} \sqrt{P_u} \mathbf{h}_u^T \mathbf{f}_u x_u + n_u,
 \end{equation}
 where $P_u$ and $x_u$ are the allocated power and signal for $u$-th UE, respectively, furthermore, $\mathbf{f}_u$ is the beamforming vector for the $u$-th UE and $n_u \in \mathbb{C}$ is a circularly symmetric complex additive Gaussian noise with $\mathbb{E}[n_u n_u^H] = \sigma^2$.
 
  \section{Problem Formulation}
  
 In this paper it is assumed that the transmitter doesn't complete information about CSI, which is really time consuming for estimation especially in RIS-aided environments. Instead, the sum~rate maximization problem is formulated based on statistical components of the channel, namely angular information and Rician factors of the channels which are easier to estimate through feedback link \cite{liu2018spectral}.

 The ergodic rate in $ \mathrm{[bit/sec/Hz]}$ of random MIMO channel with equal power allocation for the $u$-th UE is given by
 \begin{equation}
     R_u = \mathbb{E}\left[\log_2 \left(1 + \frac{\frac{P}{U} |\mathbf{h}_u^T \mathbf{f}_u|^2}{\sigma_2 + \sum_{i \neq u} \frac{P}{U} |\mathbf{h}_u^T \mathbf{f}_i|^2} \right)  \right] = \mathbb{E}\left[\log_2 \left(1 + \frac{\frac{P}{U} \mathbf{f}_u^H \mathbf{h}_u^* \mathbf{h}_u^T \mathbf{f}_u}{\sigma_2 + \frac{P}{U} \sum_{i \neq u} \mathbf{f}_i^H \mathbf{h}_u^* \mathbf{h}_u^T \mathbf{f}_i} \right)  \right],
     \label{capacity}
 \end{equation}
where the expectation is over all the channel realisations and $(.)^*$ denotes the conjugate operator. The ergodic rate of the system is represented by
 \begin{equation}
R = \sum_{u = 1}^{U} R_u.
 \end{equation}
 Finally, the ergodic sum-rate maximisation problem could be formulated as
  \begin{subequations} \label{eq:litdiff}
 \begin{align}
   \mathcal{P}_1: \hspace{5mm} \max_{\boldsymbol{\Theta}, \mathbf{f}_u} \hspace{5mm} &\sum_{u=1}^{U}  {\mathbb{E}\left[\log_2 \left(1 + \frac{\frac{P}{U} \mathbf{f}_u^H \mathbf{h}_u^* \mathbf{h}_u^T \mathbf{f}_u}{\sigma_2 + \sum_{i \neq u} \frac{P}{U} \mathbf{f}_i^H \mathbf{h}_u^* \mathbf{h}_u^T \mathbf{f}_i} \right)  \right]} \tag{\ref{eq:litdiff}} \\
 \text{subject to} \hspace{10mm} &|\xi_m|^2 = 1, \hspace{3mm}m = 1, \dots, M \\
 &\Vert \mathbf{f}_u \Vert_2^2 = 1, \hspace{3mm}u = 1, \dots, U
 \end{align}
 \end{subequations}
 The maximisation problem $\mathcal{P}_1$ is mathematically intractable duo to existence of the expectation and unit norm constraints. In what follows, we will derive the expectation and find the upper bound of it. 
 
To begin with, (\ref{capacity}) can be re-written as

\begin{equation}
    R_u = \mathbb{E}\left[\log_2 \left({ \sigma^2 + \frac{P}{U} \sum_{i=1}^{U} \mathbf{f}_u^H \mathbf{h}_u^* \mathbf{h}_u^T \mathbf{f}_u} \right)  \right] - \mathbb{E}\left[\log_2 \left({\sigma^2 + \frac{P}{U} \sum_{i\neq u} \mathbf{f}_i^H \mathbf{h}_u^* \mathbf{h}_u^T \mathbf{f}_i} \right)  \right]
\end{equation}
Next, by using Jensen's inequality, we obtain
\begin{equation}
    R_u  \leq \log_2 \left({ \sigma^2 + \frac{P}{U} \sum_{i=1}^{U} \mathbf{f}_u^H \mathbb{E} \left[\mathbf{h}_u^* \mathbf{h}_u^T \right] \mathbf{f}_u} \right) - \log_2 \left({\sigma^2 + \frac{P}{U} \sum_{i\neq u} \mathbf{f}_i^H \mathbb{E}\left[ \mathbf{h}_u^* \mathbf{h}_u^T \right] \mathbf{f}_i} \right)  
    \label{exp_in}
\end{equation}

 Based on (\ref{exp_in}), the problem turns into finding $\mathbb{E}[ \mathbf{h}_u^* \mathbf{h}_u^T ]$. Next, we focus on derivation of $\mathbb{E}[ \mathbf{f}_u^H \mathbf{h}_u^* \mathbf{h}_u^T \mathbf{f}_u ]$ where the later one has the similar approach. The equivalent channel vector $\mathbf{h}_u$ can be re-written as
 
\begin{thm} \label{th1}
$\mathbb{E}[ \mathbf{h}_u^* \mathbf{h}_u^T ]$ could be approximated as follows
\begin{align}
    \mathbb{E}[ \mathbf{h}_u^* \mathbf{h}_u^T ] \triangleq \mathbf{C}_u &= \frac{\delta_{u,0} \kappa_{u,0}}{1 +  \kappa_{u,0}} \Bar{\mathbf{h}}_{u,0}^* \Bar{\mathbf{h}}_{u,0}^T + \left( \frac{\delta_{u,0}}{1 +  \kappa_{u,0}} + \frac{M\delta_{1}\delta_{u,2}}{1 + \kappa_1} \right) \mathbf{I}_N \\ \nonumber
    &+ \sqrt{\frac{\delta_{1} \delta_{u,0} \delta_{u,2} \kappa_{1} \kappa_{u,0} \kappa_{u,2}}{(1 + \kappa_{1})(1 + \kappa_{u,0})(1 + \kappa_{u,2})}} \left(  \Bar{\mathbf{h}}_{u,0}^* \Bar{\mathbf{h}}_{u,2}^T \boldsymbol{\Xi} \Bar{\mathbf{H}}_1 +  \Bar{\mathbf{H}}_1^H \boldsymbol{\Xi}^* \Bar{\mathbf{h}}_{u,2}^* \Bar{\mathbf{h}}_{u,0}^T \right) \\ \nonumber
    &+ \frac{M\delta_{u,2} \delta_{1} \kappa_{1}}{(1 + \kappa_{1})(1 + \kappa_{u, 2})}\mathbf{a}_\text{BS}(\phi^{(\text{BS})}, \psi^{(\text{BS})})^H \mathbf{a}_\text{BS}(\phi^{(\text{BS})}, \psi^{(\text{BS})}) \\ \nonumber
    & + \frac{\delta_{1} \delta_{u,2}  \kappa_{1} \kappa_{u,2}}{(1 +  \kappa_{1})(1 +  \kappa_{u, 2})} \Bar{\mathbf{H}}_1^H \boldsymbol{\Xi}^* \Bar{\mathbf{h}}_{u,2}^* \Bar{\mathbf{h}}_{u,2}^T \boldsymbol{\Xi}  \Bar{\mathbf{H}}_1
\end{align}
\end{thm}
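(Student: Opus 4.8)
The plan is to evaluate $\mathbb{E}[\mathbf{h}_u^*\mathbf{h}_u^T]$ directly, by expanding the effective channel and then exploiting two structural facts: the mutual independence of the three physical links $\mathbf{h}_{u,0}$, $\mathbf{h}_{u,2}$, $\mathbf{H}_1$, and the Rician split of each link into a deterministic LoS part and a zero-mean, i.i.d., unit-variance NLoS part. Transposing $\mathbf{h}_u^T = \mathbf{h}_{u,0}^T + \mathbf{h}_{u,2}^T\boldsymbol{\Xi}\mathbf{H}_1$ and using that $\boldsymbol{\Xi}$ is diagonal gives $\mathbf{h}_u = \mathbf{h}_{u,0} + \mathbf{H}_1^T\boldsymbol{\Xi}\mathbf{h}_{u,2}$, hence
\begin{align}
\mathbf{h}_u^*\mathbf{h}_u^T = \underbrace{\mathbf{h}_{u,0}^*\mathbf{h}_{u,0}^T}_{T_1} + \underbrace{\mathbf{h}_{u,0}^*\mathbf{h}_{u,2}^T\boldsymbol{\Xi}\mathbf{H}_1}_{T_2} + \underbrace{\mathbf{H}_1^H\boldsymbol{\Xi}^*\mathbf{h}_{u,2}^*\mathbf{h}_{u,0}^T}_{T_2^H} + \underbrace{\mathbf{H}_1^H\boldsymbol{\Xi}^*\mathbf{h}_{u,2}^*\mathbf{h}_{u,2}^T\boldsymbol{\Xi}\mathbf{H}_1}_{T_4} . \nonumber
\end{align}
I would then take $\mathbb{E}[\cdot]$ of each $T_i$ separately.

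The only moment facts needed are $\mathbb{E}[\tilde{\mathbf{h}}_{u,0}]=\mathbf{0}$, $\mathbb{E}[\tilde{\mathbf{h}}_{u,2}]=\mathbf{0}$, $\mathbb{E}[\tilde{\mathbf{H}}_1]=\mathbf{0}$; $\mathbb{E}[\tilde{\mathbf{h}}_{u,0}^*\tilde{\mathbf{h}}_{u,0}^T]=\mathbf{I}_N$, $\mathbb{E}[\tilde{\mathbf{h}}_{u,2}^*\tilde{\mathbf{h}}_{u,2}^T]=\mathbf{I}_M$; and, for any deterministic $\mathbf{B}\in\mathbb{C}^{M\times M}$, $\mathbb{E}[\tilde{\mathbf{H}}_1^H\mathbf{B}\tilde{\mathbf{H}}_1]=\mathrm{trace}(\mathbf{B})\,\mathbf{I}_N$, which follows entrywise from the i.i.d.\ unit-variance assumption on the NLoS entries. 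Substituting the Rician decomposition into $T_1$ and discarding the zero-mean LoS--NLoS cross products gives $\mathbb{E}[T_1]=\frac{\delta_{u,0}\kappa_{u,0}}{1+\kappa_{u,0}}\bar{\mathbf{h}}_{u,0}^*\bar{\mathbf{h}}_{u,0}^T+\frac{\delta_{u,0}}{1+\kappa_{u,0}}\mathbf{I}_N$, the first line of $\mathbf{C}_u$ together with part of the $\mathbf{I}_N$ coefficient. Since $\mathbf{h}_{u,0}$ is independent of $(\mathbf{h}_{u,2},\mathbf{H}_1)$ and $\mathbf{h}_{u,2}$ of $\mathbf{H}_1$, $\mathbb{E}[T_2]=\mathbb{E}[\mathbf{h}_{u,0}^*]\,\mathbb{E}[\mathbf{h}_{u,2}^T]\,\boldsymbol{\Xi}\,\mathbb{E}[\mathbf{H}_1]$, each factor collapsing to its LoS term; this yields the $\sqrt{\delta_1\delta_{u,0}\delta_{u,2}\kappa_1\kappa_{u,0}\kappa_{u,2}/[(1+\kappa_1)(1+\kappa_{u,0})(1+\kappa_{u,2})]}\,\bar{\mathbf{h}}_{u,0}^*\bar{\mathbf{h}}_{u,2}^T\boldsymbol{\Xi}\bar{\mathbf{H}}_1$ term, and $\mathbb{E}[T_2^H]$ is its Hermitian conjugate.

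The substantive step is $T_4$, which I would handle by iterated expectation, averaging over $\mathbf{h}_{u,2}$ first (independent of $\mathbf{H}_1$), using $\mathbb{E}_{\mathbf{h}_{u,2}}[\mathbf{h}_{u,2}^*\mathbf{h}_{u,2}^T]=\frac{\delta_{u,2}\kappa_{u,2}}{1+\kappa_{u,2}}\bar{\mathbf{h}}_{u,2}^*\bar{\mathbf{h}}_{u,2}^T+\frac{\delta_{u,2}}{1+\kappa_{u,2}}\mathbf{I}_M$, so that
\begin{align}
\mathbb{E}[T_4]=\mathbb{E}_{\mathbf{H}_1}\!\left[\mathbf{H}_1^H\boldsymbol{\Xi}^*\left(\tfrac{\delta_{u,2}\kappa_{u,2}}{1+\kappa_{u,2}}\bar{\mathbf{h}}_{u,2}^*\bar{\mathbf{h}}_{u,2}^T+\tfrac{\delta_{u,2}}{1+\kappa_{u,2}}\mathbf{I}_M\right)\boldsymbol{\Xi}\mathbf{H}_1\right] . \nonumber
\end{align}
On the identity block I would use $\boldsymbol{\Xi}^*\boldsymbol{\Xi}=\mathbf{I}_M$, and on both blocks the bilinear-form identity $\mathbb{E}[\mathbf{H}_1^H\mathbf{B}\mathbf{H}_1]=\frac{\delta_1\kappa_1}{1+\kappa_1}\bar{\mathbf{H}}_1^H\mathbf{B}\bar{\mathbf{H}}_1+\frac{\delta_1}{1+\kappa_1}\mathrm{trace}(\mathbf{B})\,\mathbf{I}_N$, a direct consequence of the moment facts above. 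Inserting $\bar{\mathbf{H}}_1=\mathbf{a}_\text{RIS}(\phi^{(\text{RIS})},\psi^{(\text{RIS})})\mathbf{a}_\text{BS}(\phi^{(\text{BS})},\psi^{(\text{BS})})^H$ together with the array-response normalization then turns $\bar{\mathbf{H}}_1^H(\cdot)\bar{\mathbf{H}}_1$ into the $\mathbf{a}_\text{BS}$-dependent terms and the traces into the $M\delta_1\delta_{u,2}/(1+\kappa_1)$ part of the $\mathbf{I}_N$ coefficient; collecting $\mathbb{E}[T_1]+\mathbb{E}[T_2]+\mathbb{E}[T_2^H]+\mathbb{E}[T_4]$ reproduces $\mathbf{C}_u$.

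The main obstacle is precisely the $T_4$ term: it is a bilinear form in the matrix $\mathbf{H}_1$ whose kernel is itself random through $\mathbf{h}_{u,2}$, so the conditioning order must be chosen correctly and the $\mathrm{trace}(\mathbf{B})\mathbf{I}_N$ identity for the Gaussian block of $\mathbf{H}_1$ carefully applied. The word ``approximated'' in the statement is attributable to the normalization and term-consolidation conventions invoked at this last step — the adopted normalization of $\mathbf{a}_\text{RIS}$ and $\mathbf{a}_\text{BS}$ and the bookkeeping of the resulting scalar factors — rather than to any genuinely lossy manipulation; under the independence and i.i.d.\ unit-variance NLoS assumptions the expectation is otherwise exact, so I would make those conventions explicit before equating the result with $\mathbf{C}_u$.
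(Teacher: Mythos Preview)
Your proposal is correct and mirrors the paper's proof essentially step for step: the same four-term expansion of $\mathbf{h}_u^*\mathbf{h}_u^T$, the same zero-mean/unit-variance moment facts, and the same trace identity $\mathbb{E}[\tilde{\mathbf{H}}_1^H\mathbf{B}\tilde{\mathbf{H}}_1]=\mathrm{trace}(\mathbf{B})\,\mathbf{I}_N$ for the cascaded term. Your handling of $T_4$ by iterated expectation is just a slightly more explicit version of what the paper does in one line, and your remark that the word ``approximated'' really only reflects the array-response normalization conventions (rather than any genuine loss) is accurate and worth keeping.
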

\begin{proof}
Refer to Appendix 1, please. 
\end{proof}
 
According to theorem \ref{th1}, $\mathbf{C}_u$ only depends on the statistics of the channel, i.e. the angles of departure and arrival, Rician factors, and also the sum rate is dependent upon the active beamforming vector at the BS and the passive beamforming matrix at the RIS, therefore both active and passive beamformers should be designed carefully. 

 \begin{subequations} \label{eq:opt2}
 \begin{align}
  \mathcal{P}_2: \hspace{5mm} \max_{\boldsymbol{\Theta}, \mathbf{f}_u} \hspace{5mm} &\sum_{u=1}^{U}  \log_2 \left(1 + \frac{\frac{P}{U} \mathbf{f}_u^H  \mathbf{C}_u \mathbf{f}_u}{\sigma_2 + \sum_{i \neq u} \frac{P}{U} \mathbf{f}_i^H \mathbf{C}_u \mathbf{f}_i} \right) \tag{\ref{eq:opt2}} \\
 \text{subject to} \hspace{10mm} &|\xi_m|^2 = 1, \hspace{3mm}m = 1, \dots, M \\
 &\Vert \mathbf{f}_u \Vert_2^2 = 1, \hspace{3mm}u = 1, \dots, U
 \end{align}
 \end{subequations}
 As it is obvious from problem $\mathcal{P}_2$, the new optimisation problem is now a function of distance dependent path-loss, Rician factors and other statistical components of the channel and the effect of small scale fading has been averaged out. 
 
 In the next section, we will propose a PPO algorithm to solve problem $\mathcal{P}_2$.
 
 \section{Proximal Policy Optimization Approach}
 \subsection{PPO Background}
 In this section, PPO algorithm is used to solve optimisation problem $\mathcal{P}_2$. PPO was first introduced in \cite{schulman2017proximal} and, moreover, this algorithm has been shown to perform better than the other algorithms on the benchmark while being easy to tune, efficient, and simple to implement. PPO is a model-free, on-policy, actor-critic, policy gradient method. 
 
 Consider an infinite-horizon discounted Markov decision process (MDP), defined by the tuple $(\mathcal{S}, \mathcal{A}, P, r, \gamma)$, where $\mathcal{S}$ being the finite set of states, $\mathcal{A}$ is the finite set of actions, $P: \mathcal{S} \times \mathcal{A} \times \mathcal{S} \rightarrow \mathbb{R}$
 is the transition probability distribution, $r: \mathcal{S} \rightarrow \mathbb{R}$ is the reward function and finally, $\gamma \in [0, 1]$ is the discount factor.
PPO purpose is to retain the reliability of trust region policy optimization (TRPO) algorithms, which guarantee monotonic improvements when taking into account the Kullback-Leibler (KL) divergence of policy updates while only using first-order optimisation techniques. TRPO maximizes the objective function while taking account of a constraint on the size of the policy update. Specifically,
\begin{subequations} \label{eq:opt}
\begin{align}
    \max_{\boldsymbol{\theta}}& \hspace{2mm} \hat{\mathbb{E}}_t \left[ \frac{\pi_{\boldsymbol{\theta}}(a_t | s_t)}{\pi_{\boldsymbol{\theta}_\text{old}}(a_t | s_t)} \hat{A}_t  \right]  \tag{\ref{eq:opt}} \\
    \text{subject to} &\hspace{10mm} \hat{\mathbb{E}} \left[ \mathrm{KL} \left[   \pi_{\boldsymbol{\theta}_\text{old}}(. | s_t), \pi_{\boldsymbol{\theta}}(. | s_t) \right]   \right] \leq \delta,
\end{align}
 \end{subequations}
where the expectation $\hat{\mathbb{E}}_t[.]$ indicates the empirical average over a finite batch of samples, in an algorithm that alternates between sampling and optimization. $\pi_{\boldsymbol{\theta}}$ is the policy neural network with parameters denoted by $\boldsymbol{\theta}$ and $\pi: \mathcal{S} \times \mathcal{A} \rightarrow [0, 1]$, $a_t$ and $s_t$ are the action and state of given timestep $t$, respectively with $a_t \sim \pi_{\boldsymbol\theta}(a_t | s_t) $, $\boldsymbol{\theta}_\text{old}$ is the vector of policy parameters before the update. Furthermore, $\hat{A}_t$ is an estimator of the advantage function at timestep $t$ and is given by
\begin{equation}
    A_t = Q(s_t, a_t) - V(s_t),
\end{equation}
where $Q(., .)$ and $V(.)$ are the action-value the value functions, respectively and are defined as follows
\begin{align}
     Q(s_t, a_t) &= \mathbb{E}_{s_{t+1}, a_{t+1}, \dots} \left[ \sum_{\ell = 0}^{ \infty} \gamma^\ell r(s_{t+1}) \right], \\
     V(s_t) &= \mathbb{E}_{a_t, s_{t+1}, \dots}  \left[ \sum_{\ell = 0}^{ \infty} \gamma^\ell r(s_{t+1}) \right],
\end{align}
where $s_{t+1} \sim P(s_{t+1} | s_t, a_t) $. The estimate of the advantage function in the interval $t \in [0, T]$ is given by \cite{schulman2017proximal}
\begin{equation}
    \hat{A}_t = \delta_t + (\gamma \lambda)\delta_{t+1} + \dots + (\gamma\lambda)^{T-t+1}\delta_{T-1}
    \label{a_hat}
\end{equation}
with $\delta_t = r_t + \gamma V(s_{t+1}) - V(s_t) $ and $\lambda$ being a hyperparameter and denotes the factor for trade-off of bias and variance for generalized advantage estimator (GAE).

Next, let $\rho_t(\boldsymbol{\theta})$ denote the probability ratio $\rho_t(\boldsymbol{\theta}) =  \frac{\pi_{\boldsymbol{\theta}}(a_t | s_t)}{\pi_{\boldsymbol{\theta}_\text{old}}(a_t | s_t)}$, then obviously $\rho_t(\boldsymbol{\theta}_{\mathrm{old}}) = 1$, hence, according to (\ref{eq:opt}), TRPO maximises 
\begin{equation}
    \mathcal{L}^{\mathrm{CPI}}(\boldsymbol{\theta}) = \hat{\mathbb{E}}_t \left[ \rho_t(\boldsymbol{\theta})\hat{A}_t  \right] 
    \label{CPI}
\end{equation}
where CPI refers to conservative policy iteration. The main issue of the optimisation problem in (\ref{CPI}) is the probability of large policy update. For instance, if $\pi_{\boldsymbol{\theta}_\text{old}}(a_t | s_t)$ has a small value where as $\pi_{\boldsymbol{\theta}}(a_t | s_t)$ has a relatively large value, the value of $\rho_t(\boldsymbol{\theta})$ will tend to be really large and lead to taking big gradient steps that might cause the policy change in drastic ways. 
To solve this issue, PPO modifies the objective function in (\ref{CPI}) as follows
\begin{equation}
    \mathcal{L}^{\mathrm{CLIP}}(\boldsymbol{\theta}) = \hat{\mathbb{E}}_t \left[ \min \left( \rho_t(\boldsymbol{\theta})\hat{A}_t, \mathrm{clip}\left(\rho_t(\boldsymbol{\theta} \right), 1-\epsilon_c, 1+\epsilon_c) \hat{A}_t \right)  \right] 
    \label{CLIP}
\end{equation}
where $\epsilon_c$ is a hyperparameter. This objective is motivated by the following reasons. The first term in the $\min$ operator is the $\mathcal{L}^{\mathrm{CPI}}$ and the second term guarantees the probability ratio to be inside the interval $[1-\epsilon_c, 1+\epsilon_c]$ with the help of $\mathrm{clip}(., ., .)$ function. The $\mathrm{clip}(., ., .)$ function saturates the variable in the first input between the values of the second and third input. Hence, the final objective function is the minimum of clipped and unclipped objective and as a result provides a lower bound on the unclipped objective which leads to the actions with a negative advantage function are eliminated as a preference. With a small set of sample sizes, the algorithm isn't too greedy in favoring actions with positive advantage functions, nor too quick in avoiding actions with negative advantage functions. 
\subsection{Joint active and passive beamforming using PPO}
Generally, PPO is presented as an MDP with observation and action spaces. When solving the joint active and passive beamforming problem, the BS, RIS, and all the users in the system are denoted by the environment $\mathcal{E}$, while agent is BS which is able to control the RIS. The following are the key PPO elements that are employed to solve the joint active and passive beamforming problem.
\subsubsection{Observation space} At each timestep $t$, the observation part consist of four parts, first, it contains the real and imaginary parts of the beamforming vector $\mathbf{f}_u$, i.e., $\mathcal{F} = \{\mathcal{F}_\mathrm{r}, \mathcal{F}_\mathrm{i} \} $ with $\mathcal{F}_\mathrm{r} = \{\Re(\mathbf{f}_1), \dots, \Re(\mathbf{f}_U) \}$ and $\mathcal{F}_\mathrm{i} = \{\Im(\mathbf{f}_1), \dots, \Im(\mathbf{f}_U) \}$. The second part is the real and imaginary parts of the phase shifts of the RIS, i.e., $\mathcal{R} = \{ \Re(\mathrm{diag}(\boldsymbol{\Theta})), \Im(\mathrm{diag}(\boldsymbol{\Theta})) \}$. The third part of the observation is $\mathcal{M} = \{\boldsymbol{\mu}_1, \dots, \boldsymbol{\mu}_U \}$. Next, the forth part of the observation is $\mathcal{N} = \{\Lambda_1, \dots, \Lambda_U \}$. Finally the observation vector at timestep $t$ is as follows
\begin{equation}
    s_t = \{\mathcal{F}, \mathcal{R}, \mathcal{M}, \mathcal{N} \},
\end{equation}
Hence, the observation shape is $U(3N+1)+2M$.

\subsubsection{Action space} At each timestep $t$ the action space is the vector containing the real and imaginary parts of the beamforming vectors for all the users and the real and imaginary parts of the phase shifts of the RIS. Thus, the action shape is $2UN + 2M$ and the action range is $[0, 2\pi]$. Since $\mathrm{tanh}$ is the activation function for the final layer, which produces the values between $-1$ and $+1$, for converting the result to the desired action range, it is sufficient to set $a_t = \pi(a_t^{\prime}+1)$ where $a_t^{\prime}$ is the output of $\mathrm{tanh}$ activation layer. After building the real and imaginary parts of active and passive beamforming vectors, for guaranteeing the unit module constraint  it is sufficient to set $\mathbf{f}_u = \cos(\mathbf{f}_u^\mathrm{r}) + j \sin(\mathbf{f}_u^\mathrm{i})$, where $\mathbf{f}_u^\mathrm{r}$ and $\mathbf{f}_u^\mathrm{i}$ are the real and imaginary part of the output action for the $u$-th UE. Similarly, for the RIS, $\boldsymbol{\Theta} = \mathrm{diag}(\cos(\boldsymbol{\Theta}^\mathrm{r}) + j\sin(\boldsymbol{\Theta}^\mathrm{i}))$, where $\boldsymbol{\Theta}^\mathrm{r}$ and $\boldsymbol{\Theta}^\mathrm{i}$ are the real and imaginary part of the built from output action corresponding to the RIS part. 
\subsubsection{Reward function} At each timestep $t$ the reward function is the sum rate rate of all the users, specifically,
\begin{equation}
    r_t = \sum_{u=1}^U R_u(t)
\end{equation}
where $R_u(t)$ is the rate of $u$-th user at time step $t$.

The details of the proposed PPO algorithm for joint active ans passive beamforming are presented in Algorithm~\ref{alg_1}.

\begin{algorithm}[h]
\caption{PPO, Actor-Critic Style}
\label{alg_1}
\SetAlgoLined
 \textbf{Initialisation:} Initialise  time, states, actions, and replay buffer $\mathcal{D}$ for storing the random states, action and reward in each time step\;
 Randomly initialise the actor network and the critic network \;
 \textbf{Output:} The beamforming vectors of all the users and the phase shifts of the RIS \;
 \For{episode $j = 1, \dots, J$}{
  Initialise the environment $\mathcal{E}$ and make the initial state $s_0$\;
   Run the policy $\pi_{\boldsymbol{\theta}_\mathrm{old}}$ for $T$ timesteps \;
   Compute advantage estimates $\hat{A}_1, \dots, \hat{A}_T$ using (\ref{a_hat}) \;
   Optimise surrogate  $\mathcal{L}^{\mathrm{CLIP}}(\boldsymbol{\theta})$ w.r.t  $\boldsymbol{\theta}$, with $K$ epochs and minibatch size $M \leq T$ using (\ref{CLIP})\;
    $\boldsymbol{\theta}_{\mathrm{old}} \leftarrow \boldsymbol{\theta} $
 }
$\text{ Save the PPO model}$ \;
\end{algorithm}

\section{Simulation Results}
In this section, simulation results are provided to evaluate the performance of our proposed algorithm. The simulation environment is run on Python 3.10.0 with PyTorch v1.11.0 on a computer with AMD Ryzen 7PRO Eight-Core Processor 3.20 GHz CPU and $32$ GB of memory. In the simulation, the BS is located at $[5, 0, 30]$ at the Cartesian coordination whereas the RIS is at $(0, 70, 3)$, finally the users are randomly located at a circle with raduis of $3 \mathrm{m}$ centreing at $[5, 70, 0]$ and the number of users in the simulation is set to be $3$. The large-scale path loss is given by $PL [\mathrm{dB}] = PL_0 - 10\alpha \log_{10} (d/d_0)$, where $PL_0$ is the path loss at the reference distance $d_0=1$ which is given by $PL_0 = -30 \mathrm{dB}$, also $\alpha$ is the path loss exponent, and
$d$ is the distance between transmitter and receiver pair. The path
loss exponents of the BS-UEs, BS-RIS and RIS-UEs links are doneted by $\alpha_0$, $\alpha_1$, and $\alpha_2$, respectively. We set $\alpha_0 = 3.4$, $\alpha_1 = 2.2$ and $\alpha_1 = 3$, which means the path loss exponent of the BS-users link is larger than the other links. The number of BS antennas at the horizontal and vertical axis is $8$ and $4$, respectively, thus the total number of BS antennas are $N = 32$. The RIS consists of $8$ horizontal and $4$ vertical elements, which mean the total number of elements are $M = 32$. The noise power density is set to be $-80 \mathrm{dBm}$, the maximum transmit power at the BS is $10\mathrm{dBm}$.
The Rician factor between BS and UEs are set to be $\kappa_{u, 0} = -3 \mathrm{dB}, u = 1, \dots, U$ and that of between BS and RIS and RIS and UEs is $\kappa_{1} = \kappa_{u, 2} = 10 \mathrm{dB}, u = 1, \dots, U$. A typical realisation of the environment is illustrated in Fig.~\ref{env}. At the beginning of each episode a new environment similar to Fig.~\ref{env} with the fixed locations of the BS and RIS and randomly changing UEs positions is generated and the learning process goes on with the new generated environment. 
The hyperparameters for training the agents for all of the scenarios are listed in Table \ref{table:rl}
\begin{table*}[th]
\caption{Common PPO Hyperparameters} 
\centering 
\begin{tabular}{l r l r} 
\hline\hline 
Parameter & Value & Parameter & Value \\ [0.5ex] 
\hline 
Number of first fully connected layers & $(64, 64)$ & Discount factor $\gamma$ & $0.995$ \\
Trade-off factor GAE $\lambda$ & $0.95$ & Clip range $\epsilon_c$ & $0.2$  \\
Maximum time-steps of each episode & $4000$ & Number of episodes & $2000$ \\
Activation function for hidden layers & ReLU & Activation function for output layer & tanh \\
Horizon ($T$) & $10$ & Adam stepsize (Learning rate) & $0.00015$ \\
Number of epochs & $100$ &  Minibatch size & $10$ \\
 [1ex]
\hline 
\end{tabular}
\label{table:rl} 
\end{table*}

In order to evaluate the performance of the proposed algorithms, we used the following two baselines:
\begin{itemize}
    \item \textbf{Baseline 1} (Without RIS): In this case, The RIS is removed from the system setup by letting $M = 0$ and the PPO is used to solve the new optimisation problem $\mathcal{P}_3$. In this case $\mathbf{C}_u$ reduced to the following
    \begin{equation}
        \mathbf{C}_u = \frac{\delta_{u,0} \kappa_{u,0}}{1 +  \kappa_{u,0}} \Bar{\mathbf{h}}_{u,0}^* \Bar{\mathbf{h}}_{u,0}^T + \left( \frac{\delta_{u,0}}{1 +  \kappa_{u,0}}\right) \mathbf{I}_N,
    \end{equation}

and thus, the optimisation problem problem $\mathcal{P}_3$ is formulated as follows
 \begin{subequations} \label{eq:opt3}
 \begin{align}
  \mathcal{P}_3: \hspace{5mm} \max_{\boldsymbol{\Theta}, \mathbf{f}_u} \hspace{5mm} &\sum_{u=1}^{U}  \log_2 \left(1 + \frac{\frac{P}{U} \mathbf{f}_u^H  \mathbf{C}_u \mathbf{f}_u}{\sigma_2 + \sum_{i \neq u} \frac{P}{U} \mathbf{f}_i^H \mathbf{C}_u \mathbf{f}_i} \right) \tag{\ref{eq:opt3}} \\
 \text{subject to} \hspace{10mm}  &\Vert \mathbf{f}_u \Vert_2^2 = 1, \hspace{3mm}u = 1, \dots, U,
 \end{align}
 \end{subequations}
finally, the PPO is used to solve the problem $\mathcal{P}_3$.

    \item \textbf{Baseline 2} (Random Phase): In this scenario, the phase shifts of the RIS is assumed to be random and the PPO agent is responsible for designing the beamforming vectors. The optimisation problem in this case is as follows
 \begin{subequations} \label{eq:opt4}
 \begin{align}
  \mathcal{P}_4: \hspace{5mm} \max_{\mathbf{f}_u} \hspace{5mm} &\sum_{u=1}^{U}  \log_2 \left(1 + \frac{\frac{P}{U} \mathbf{f}_u^H  \mathbf{C}_u \mathbf{f}_u}{\sigma_2 + \sum_{i \neq u} \frac{P}{U} \mathbf{f}_i^H \mathbf{C}_u \mathbf{f}_i} \right) \tag{\ref{eq:opt4}} \\
 \text{subject to} \hspace{10mm}  &\Vert \mathbf{f}_u \Vert_2^2 = 1, \hspace{3mm}u = 1, \dots, U \\
&\boldsymbol{\Theta} \sim \mathcal{U}[0, 2\pi).
 \end{align}
 \end{subequations}
In order to solve (\ref{eq:opt4}) using PPO, at the beginning of each training episode, the phase shifts of the RIS initialised randomly and keep fixed during the episode and the agent is responsible for optimising the beamforming vectors. In this case, the observation space consist of real and imaginary part of the direct channel and the channel between BS and RIS and that from RIS and UE. The second part of the observation space is the real and imaginary parts of beamfoming vectors. Finally the third part of the observation space is the sum rate of the users. The observation shape is thus $4NU + 2NM + 2UM + 1$.

Next, the action at each time step $t$ is the real and imaginary parts of the beamforming vectors. Hence, the action space is $2NU$.
\begin{figure}[t]
    \centering
    \includegraphics[scale=0.75]{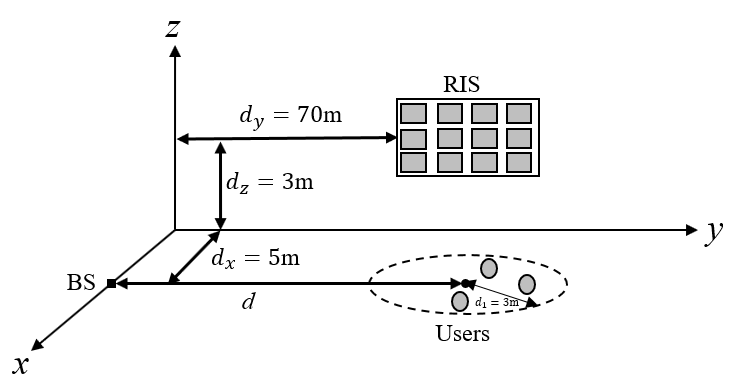}
    \caption{Environment setup}
    \label{env}
\end{figure}
Finally, so as before, the reward of the agent is the sum rate of the users.  
    \item \textbf{Baseline 3} (A2C-based optimisation): For evaluating the effectiveness of the PPO, the environment is trained with another deep reinforcement learning approach named A2C \cite{mnih2016asynchronous}. In this scenario, the observation space, action space and the reward function are the same as in S-CSI-based PPO.
     \item \textbf{Baseline 4} (I-CSI based optimisation): In order to evaluate the performance gap between the case when S-CSI is available at the BS and that when I-CSI is available at the BS, we used the PPO to joint active ans passive beamforming when the I-CSI is fully available at the BS. In this case the optimisation problem is as follows
       \begin{subequations} \label{eq:opt41}
 \begin{align}
   \mathcal{P}_5: \hspace{5mm} \max_{\boldsymbol{\Theta}, \mathbf{f}_u} \hspace{5mm} &\sum_{u=1}^{U}  {\log_2 \left(1 + \frac{\frac{P}{U} \mathbf{f}_u^H \mathbf{h}_u^* \mathbf{h}_u^T \mathbf{f}_u}{\sigma_2 + \sum_{i \neq u} \frac{P}{U} \mathbf{f}_i^H \mathbf{h}_u^* \mathbf{h}_u^T \mathbf{f}_i} \right)} \tag{\ref{eq:opt41}} \\
 \text{subject to} \hspace{10mm} &|\xi_m|^2 = 1, \hspace{3mm}m = 1, \dots, M \\
 &\Vert \mathbf{f}_u \Vert_2^2 = 1, \hspace{3mm}u = 1, \dots, U
 \end{align}
 \end{subequations}
In this case, it is assumed that the I-CSI is initialised at the start of each episode and is kept fixed within each episode.
\end{itemize}

\subsubsection{Convergence Behaviours}
Fig.~\ref{rew} shows the convergence of the PPO-based algorithm as a function of number of episodes. 
The reward curve is obtained by the cumulative rewards obtain from each episode, i.e.,
\begin{equation}
    r_t^{\prime} = \sum_{t=1}^{T^{\prime}} r_t, 
\end{equation}
also, the smoothed learning curve also obtained with the following formula
\begin{equation}
    r_t^{\mathrm{s}} = \frac{1}{k} \sum_{i = t-k+1}^{t} r_i
\end{equation}
where $T^{\prime}$ is the total time steps in each episode and $k$ is the moving average window length. Moreover, the smoothed curve obtained with running average over previous $k = 100$ steps. As it is obvious from Fig.~\ref{rew}, about $500$ episodes in sufficient for the PPO to be converged. Furthermore, the PPO-based approach outperformed the A2C-based method. Also, as the advantage of the PPO is about its robustness against taking many random actions which is the result of of using the $\mathrm{clip}$ function, it is obvious from Fig.~\ref{rew} that the fluctuation of PPO-based curve is significantly less the A2C-based approach.  
\begin{figure}[t]
    \centering
    \includegraphics[scale=0.75]{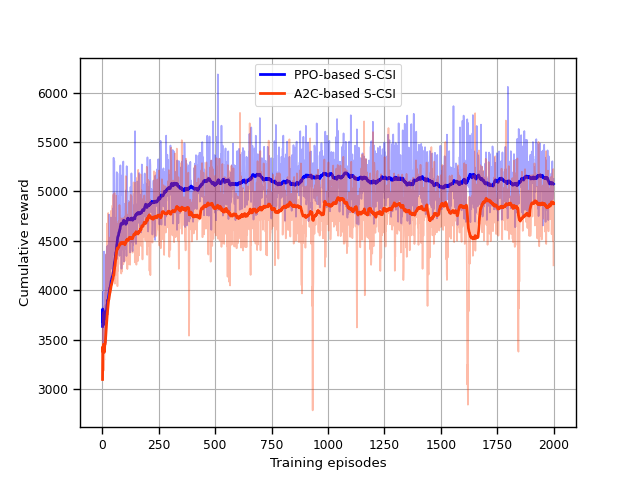}
    \caption{Illustration of the reward as a function of training episodes}
    \label{rew}
\end{figure}
The running time of the PPO-based approach is listed in Table~\ref{table:run}. It is shown that the learning time for S-CSI based PPO is significantly lower than that of I-CSI based. 
\begin{table*}[th]
\caption{The running time} 
\centering 
\begin{tabular}{l l l} 
\hline\hline 
Algorithm & Train time & Test time \\ [0.5ex] 
\hline 
PPO-based I-CSI & $8.2$ hours on CPU & $7.35$ seconds \\
PPO-based S-CSI (Multi user)  & $4.5$ hours on CPU & $4.45$ seconds  \\
PPO-based S-CSI (Single user)  & $1.45$ hours on CPU & $1.25$ seconds  \\
PPO-based Random phase shift (Multi user) & $4.25$ hours on CPU & $3.14$ seconds \\
PPO-based Random phase shift (Single user) & $1.12$ hours on CPU & $1.32$ seconds \\
PPO-based without RIS (Multi user) & $2.2$ hours on CPU & $1.63$ seconds \\
PPO-based without RIS (Single user) & $0.48$ hours on CPU & $0.52$ seconds \\
 [1ex]
\hline 
\end{tabular}
\label{table:run} 
\end{table*}

\subsubsection{Impact of RIS-UE distance}
\begin{figure}[t]
    \centering
    \includegraphics[scale=0.75]{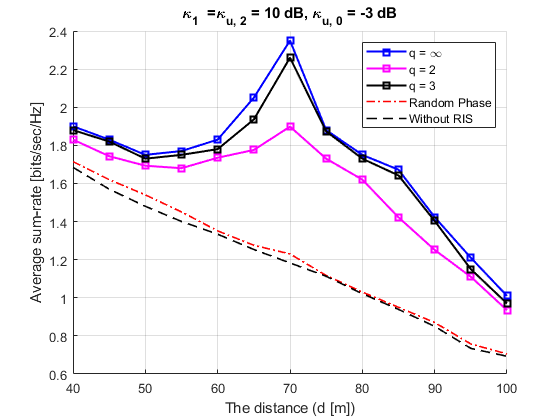}
    \caption{Average sum-rate versus the RIS-UE distance ($d$)}
    \label{dist}
\end{figure}
In practice, the phase shifts of the RIS are quantized, so the passive beamforming is done with discrete values rather than continuous values for the phase shifters. In particular, if the RIS is quantized with $q$ bits, the values of the phase shifters could just take the values of the set $\mathcal{Q} = \{0, \frac{2 \pi}{2^q}, \dots, \frac{2 \pi (2^q-1)}{2^q} \}$, where $q$ is the number of quantization bits. In Fig.~\ref{dist}, the effect of the RIS-UE distance on the average sum rate of the users is shown. For finding the discrete values of the phase shifts of the RIS, we set $\hat{\xi}_i = g(\xi_m)$, where the function $g(\xi_m)$ maps the continuous phase shifts of the RIS, $\xi_m$, to its nearest point in $\mathcal{Q}$, that is
\begin{equation}
    g(\xi_m) = \hat{\xi}_i, \hspace{1mm} \text{if} \hspace{1mm} |\xi_m - \hat{\xi}_i| \leq |\xi_m - \hat{\xi}_j|, \forall \hat{\xi}_i, \hat{\xi}_j \in \mathcal{Q}, \forall i\neq j.
\end{equation}
For training the agent, at the environment setup of Fig.~\ref{env}, the number of users reduced to one user and the single user is located at the centre of the cell. Then the agent trained with different values of $d$ ranging from $d = 40 \mathrm{m}$ to $d = 100 \mathrm{m}$ with steps of $5\mathrm{m}$. The BS maximum power budget is set to be $P = 5\mathrm{dBm}$ and $\kappa_1 = \kappa_{u, 2} = 10 \mathrm{dB}$ and $\kappa_{u, 0} = -3 \mathrm{dB}$ where obviously $u = 1$. Fig.~\ref{kappa} shows how the performance of the RIS is significantly higher when the user is within close proximity of the RIS, even with 2-bit quantized RIS; The 2-bit quantized RIS, on the other hand, performs significantly worse than the 3-bit quantized and the ideal case, however, this gap can be fulfilled with 3-bits quantized RIS which has a close performance in comparison with the ideal case with continuous phase shifters.

\subsubsection{Impact of the transmitted power by the BS}
\begin{figure}[t]
    \centering
    \includegraphics[scale=0.75]{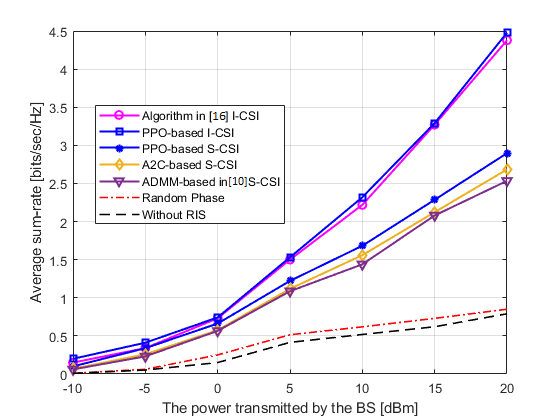}
    \caption{The performance comparison as a function of BS transmit power}
    \label{power}
\end{figure}
In this part, we analyse the impact of BS transmit power. In this scenario, the agent is trained separately with different transmit powers ranging from $-10 \mathrm{dBm}$ to $20\mathrm{dBm}$ with steps of $5\mathrm{dBm}$. Furthermore, the Rician factors are set to be  $\kappa_{1} = \kappa_{u, 2} = \kappa$ and $\kappa_{u, 0} = 0$ and the simulation is done with $3$ number of users. We also compare the performance of the PPO-base approach with the ADMM-based approach proposed in \cite{gan2021ris}. Also, for the validation of I-CSI-based method, the performance is compared with the algorithm proposed in \cite{guo2020weighted}. 

As it is obvious from Fig.~\ref{power}, the performance of all algorithms is the same at low SNR; however, as SNR increases, the I-CSI-based algorithm outperforms the others, while the PPO-based S-CSI algorithm outperforms the A2C-based and ADMM-based algorithms. In addition, all algorithms perform significantly better than random phase shifts of RIS and the case in the absence of RIS.

\subsubsection{Impact of the Rician Factor}

\begin{figure}[t]
    \centering
    \includegraphics[scale=0.75]{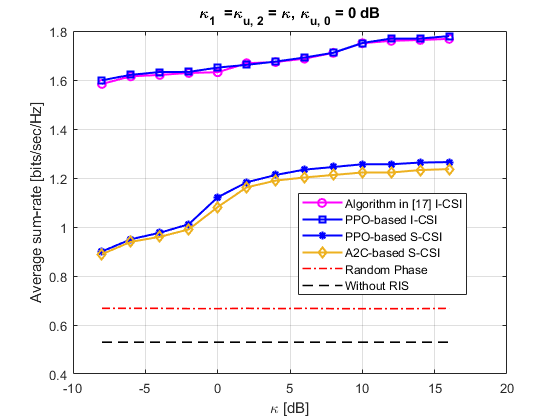}
    \caption{Average sum-rate versus the Rician factor}
    \label{kappa}
\end{figure}
In Fig.~\ref{kappa}, the effect of the Rician factor on the average sum-rate is investigated. To better reflect the impact of Rician factor, it is assumed that $\kappa_{1} = \kappa_{u, 2} = \kappa$ and $\kappa_{u, 0} = 0$. For the I-CSI scheme, we also provide the algorithm proposed in \cite{guo2020weighted}. It is revealed from Fig.~\ref{kappa} that the performance of all the algorithms with S-CSI and I-CSI improves when the Ricisn factor increases. For the S-CSI approach, in particular, this is because as $\kappa$ increases, the BS-RIS-users link becomes more deterministic, which means the LoS link becomes more dominant. It is also observed in the simulation that the gap between I-CSI and S-CSI cases eventually reaches a constant. This is because in the simulation, the direct link between the BS and users is assumed to be fully Rayleigh, and therefore no statistical information can be extracted to further improve performance in high Racian factors. Additionally, this gap will not approach zero due to multiuser interference when no I-CSI can be used for designing the active and passive beamformers. It is also observed that the performance of the algorithm proposed in \cite{guo2020weighted} is similar to PPO-based approach when the I-CSI is available at the BS. Finally, in the random phase-shift case and in the case without RIS, the average sum rate is insensitive to Rician factor.

\section{Conclusion}
In this paper, we have proposed a PPO-based algorithm for joint active and passive beamforming for RIS-aided multiuser MISO systems. A statistical CSI is used here to design both the beamforming vectors at the BS and the phase shifts at the RIS. Furthermore, based on the simulation results, in the low and moderate SNR regimes, statistical CSI-based models achieve comparable performance to instantaneous CSI-based models. In addition, simulation results show that the proposed algorithm is able to converge quickly.

 \section*{Appendix A}
\section*{Proof of Theorem I}
To calculate the $\mathbb{E}[ \mathbf{h}_u^* \mathbf{h}_u^T ]$ we begin with finding $\mathbf{h}_u^* \mathbf{h}_u^T$ where $\mathbf{h}_u^T \triangleq  \mathbf{h}_{u,0}^T + \mathbf{h}_{u,2}^T \boldsymbol{\Xi} \mathbf{H}_1 $
\begin{align}  \nonumber
   \mathbf{h}_u^* \mathbf{h}_u^T  &= \left( \mathbf{h}_{u,0}^* + \mathbf{H}_1^H \boldsymbol{\Xi}^* \mathbf{h}_{u,2}^* \right)  \left( \mathbf{h}_{u,0}^T + \mathbf{h}_{u,2}^T \boldsymbol{\Xi} \mathbf{H}_1 \right) \\ \nonumber
   &= \underbrace{\mathbf{h}_u^* \mathbf{h}_u^T}_{\mathbf{A}} + \underbrace{\mathbf{h}_u^* \mathbf{h}_{u,2}^T \boldsymbol{\Xi} \mathbf{H}_1}_{\mathbf{B}} +  \underbrace{\mathbf{H}_1^H \boldsymbol{\Xi}^* \mathbf{h}_{u,2}^* \mathbf{h}_{u,0}^T}_{\mathbf{C}} + \underbrace{\mathbf{H}_1^H \boldsymbol{\Xi}^* \mathbf{h}_{u,2}^* \mathbf{h}_{u,2}^T \boldsymbol{\Xi} \mathbf{H}_1}_{\mathbf{D}}
\end{align}
By considering $\mathbf{\tilde{h}}_{u,2}$, $ \Tilde{\mathbf{h}}_{u,0}$ and $ \Tilde{\mathbf{H}}_1$ are independently distributed complex Gaussian random matrices with zero mean and unit variance, we will have
\begin{align}
    \mathbb{E}[\mathbf{\tilde{h}}_{u,2}] = \mathbf{0}_{M \times 1} \\
    \mathbb{E}[\Tilde{\mathbf{h}}_{u,0}] = \mathbf{0}_{N \times 1} \\
    \mathbb{E}[\Tilde{\mathbf{H}}_1] = \mathbf{0}_{M \times N}
\end{align}
Next, we begin with calculation of $\mathbb{E}[\mathbf{A}]$ as follows
\begin{align} \label{one}
    \mathbb{E}[\mathbf{A}] &= \mathbb{E} \left[\frac{ \delta_{u,0} \kappa_{u,0}}{1+\kappa_{u,0}} \Bar{\mathbf{h}}_{u,0}^* \Bar{\mathbf{h}}_{u,0}^T \right] + \mathbb{E}\left[\sqrt{\frac{ \delta_{u,0} \kappa_{u,0}}{1+\kappa_{u,0}}}\sqrt{\frac{\delta_{u,0}}{1+\kappa_{u,0}}} \Bar{\mathbf{h}}_{u,0}^* \Tilde{\mathbf{h}}_{u,0}^T\right] \\ \nonumber
    &+ \mathbb{E}\left[\sqrt{\frac{ \delta_{u,0} \kappa_{u,0}}{1+\kappa_{u,0}}}\sqrt{\frac{\delta_{u,0}}{1+\kappa_{u,0}}} \Tilde{\mathbf{h}}_{u,0}^* \Bar{\mathbf{h}}_{u,0}^T\right] + \mathbb{E}\left[\frac{\delta_{u,0}}{1+\kappa_{u,0}} \Tilde{\mathbf{h}}_{u,0}^* \Tilde{\mathbf{h}}_{u,0}^T\right],
\end{align}
obviously, the second and the third term of (\ref{one}) is zero, on the other hand the first term is a constant and the last term is the definition of the covariance matrix, thus
\begin{equation}
    \mathbb{E}[\mathbf{A}] = \frac{ \delta_{u,0} \kappa_{u,0}}{1+\kappa_{u,0}} \Bar{\mathbf{h}}_{u,0}^* \Bar{\mathbf{h}}_{u,0}^T + \frac{\delta_{u,0}}{1+\kappa_{u,0}} \mathbf{I}_N
\end{equation}
By following the same approach and noting that for a Gaussian distributed zero mean and unit variance matrix $\mathbf{X} \in \mathbb{C}^{P \times Q}$ $\mathbf{Z}$, $\mathbb{E}[\mathbf{X}^H \mathbf{Z} \mathbf{X}] = \mathrm{trace}(\mathbf{Z}) \mathbf{I}_Q$ and with some simple calculations the remaining terms could be calculated and listed as follows

\begin{align}
    \mathbb{E}[\mathbf{B}] &=  \sqrt{\frac{\delta_1 \delta_{u,0} \delta_{u,2} \kappa_1 \kappa_{u,0} \kappa_{u, 2}}{( 1+\kappa_1 )(1+\kappa_{u, 0})(1+\kappa_{u,2})}}  \Bar{\mathbf{h}}_{u,0}^* \Bar{\mathbf{h}}_{u,2}^T \boldsymbol{\Xi} \Bar{\mathbf{H}}_1 \\
    \mathbb{E}[\mathbf{C}] &=  \sqrt{\frac{\delta_1 \delta_{u,0} \delta_{u,2} \kappa_1 \kappa_{u,0} \kappa_{u, 2}}{( 1+\kappa_1 )(1+\kappa_{u, 0})(1+\kappa_{u,2})}}\Bar{\mathbf{H}}_1^H \boldsymbol{\Xi}^* \Bar{\mathbf{h}}_{u,2}^* \Bar{\mathbf{h}}_{u,0}^T \\
    \mathbb{E}[\mathbf{D}] &=  \frac{M\delta_{1}\delta_{u,2}}{1 + \kappa_1}\mathbf{I}_N + \frac{M\delta_{u,2} \delta_{1} \kappa_{1}}{(1 + \kappa_{1})(1 + \kappa_{u, 2})}\mathbf{a}_\text{BS}(\phi^{(\text{BS})}, \psi^{(\text{BS})})^H \mathbf{a}_\text{BS}(\phi^{(\text{BS})}, \psi^{(\text{BS})}) \\ \nonumber  
    &+ \frac{\delta_{1} \delta_{u,2}  \kappa_{1} \kappa_{u,2}}{(1 +  \kappa_{1})(1 +  \kappa_{u, 2})} \Bar{\mathbf{H}}_1^H \boldsymbol{\Xi}^* \Bar{\mathbf{h}}_{u,2}^* \Bar{\mathbf{h}}_{u,2}^T \boldsymbol{\Xi}  \Bar{\mathbf{H}}_1
\end{align}
Finally, by summing up all the terms, the result will be obtained which completes the proof.
\ifCLASSOPTIONcaptionsoff
  \newpage
\fi

\bibliographystyle{IEEEtran}
\bibliography{refs}

\end{document}